\documentclass[final,5p,times,twocolumn,numbers,sort&compress]{elsarticle}
\usepackage{amsmath,amssymb,amsthm,mathtools}
\usepackage{microtype}

\theoremstyle{plain}
\newtheorem{proposition}{Proposition}
\newtheorem{corollary}{Corollary}

\newcommand{\SPD}{\mathrm{Sym}^{+}_{n}}
\newcommand{\tr}{\operatorname{tr}}
\newcommand{\BW}{\mathrm{BW}}
\newcommand{\cL}{\mathcal{L}}
\newcommand{\cH}{\mathcal{H}}
\newcommand{\cR}{\mathcal{R}}
\newcommand{\grad}{\operatorname{grad}}

\journal{Physics Letters A}

\begin{document}

\begin{frontmatter}

\title{Hamiltonian Lift of Bures--Wasserstein Covariance Dynamics with a Spectral Floor}

\author[first]{Christian Kerskens\corref{cor1}}
\ead{kerskenc@tcd.ie}
\cortext[cor1]{Corresponding author}
\affiliation[first]{organization={Trinity College Institute of Neuroscience, Trinity College Dublin},
            city={Dublin},
            country={Ireland}}

\begin{abstract}
Covariance dynamics on the positive-definite cone are commonly described by gradient flows, which encode dissipative relaxation but obscure the underlying phase-space structure. We construct a finite-dimensional Hamiltonian lift of covariance dynamics on $\SPD$ equipped with the Bures--Wasserstein metric. The natural mechanical Lagrangian yields canonical momentum $\Pi=\tfrac12 L_\Sigma[\dot\Sigma]$, where $L_\Sigma$ is the Lyapunov operator, and explicit Hamiltonian $\cH(\Sigma,\Pi)=2\tr(\Pi\Sigma\Pi)+V(\Sigma)$. Adding Rayleigh dissipation recovers the Bures--Wasserstein gradient flow in the overdamped limit. For a spectral-floor and trace-control potential, the quadratic fluctuation Hamiltonian around the isotropic equilibrium separates trace and traceless modes; the baseline stiffness diverges as $(s-\nu)^{-2}$ as the equilibrium covariance approaches the floor. The construction identifies a conservative parent system for constrained Bures--Wasserstein covariance relaxation and fixes the local stiffness scale induced by the spectral floor.
\end{abstract}

\begin{highlights}
\item Bures--Wasserstein covariance flows are given a Hamiltonian lift.
\item The lift has explicit Hamiltonian $H=2\operatorname{tr}(\Pi\Sigma\Pi)+V$.
\item Rayleigh damping recovers the overdamped BW gradient flow.
\item A spectral floor induces stiffness scaling as $(s-\nu)^{-2}$.
\end{highlights}

\begin{keyword}
Bures--Wasserstein geometry \sep covariance dynamics \sep Hamiltonian mechanics \sep gradient flows \sep spectral barriers \sep Gaussian fluctuations
\end{keyword}

\end{frontmatter}

\section{Introduction}
\label{sec:introduction}

Covariance matrices furnish a reduced state space for Gaussian probability measures, linear stochastic systems, continuous-variable quantum systems, and near-equilibrium fluctuation theories. The cone $\SPD$ of positive-definite covariance matrices carries several natural Riemannian geometries; the Bures--Wasserstein metric is distinguished by its relation to optimal transport between centred Gaussian measures and to the Bures metric in quantum information geometry \citep{givens1984class,takatsu2011,villani2009optimal,malago_pistone,bhatia2019bures,bengtsson2017geometry}.

Many dynamical uses of this geometry are dissipative. A free-energy functional $F(\Sigma)$ generates a gradient flow $\dot\Sigma=-\grad_{\BW}F(\Sigma)$, natural in thermodynamics, stochastic processes, and optimal-transport formulations of dissipation \citep{jko1998,otto2001,seifert2012stochastic,ambrosio2008gradient}. Such first-order equations do not display a phase-space structure: canonical momenta, symplectic form, inertial corrections, and quadratic fluctuation Hamiltonians are absent from the gradient-flow description itself. This is a structural limitation when one wants to analyse inertial corrections, linear response, or fluctuation spectra associated with covariance relaxation.

Lower spectral bounds also arise naturally in covariance regularization. In signal processing and high-dimensional statistics, eigenvalue flooring, diagonal loading, and shrinkage are used to maintain invertibility, limit condition numbers, and stabilize inverse-covariance operations in applications such as adaptive filtering, beamforming, and Gaussian classification \citep{vantrees2002optimum,ledoit2004well,chen2010shrinkage}. Trace penalties or constraints can additionally regulate total variance or signal power. The potential considered below should be viewed as a smooth dynamical analogue of these ideas: its logarithmic term enforces an eigenvalue floor, while its trace-dependent term controls the overall covariance scale. We do not claim that this precise potential is a standard covariance estimator; rather, it provides an analytically tractable example for examining the mechanics induced by a spectral boundary.

This Letter constructs the natural mechanical system associated with the Bures--Wasserstein metric and derives its cotangent-bundle Hamiltonian. The main point is that the Bures--Wasserstein metric gives not only a distance and a gradient operator, but also an explicit kinetic Hamiltonian on the covariance cotangent bundle. Rayleigh damping reduces the lift to the usual gradient flow in the overdamped limit. We then introduce a spectral-floor and trace-control potential, compute the isotropic equilibrium, and derive the quadratic fluctuation Hamiltonian. The Hessian separates trace and traceless covariance modes, and exhibits a boundary stiffness scaling proportional to $(s-\nu)^{-2}$ as the covariance floor is approached.

\section{Bures--Wasserstein covariance mechanics}
\label{sec:bw_mechanics}

For $\Sigma\in\SPD$, tangent vectors at $\Sigma$ are identified with symmetric matrices. The Lyapunov operator $L_\Sigma[A]$ \citep{bhatia2009positive,horn1991topics} assigns to each symmetric $A$ the unique symmetric solution of
\begin{equation}
\Sigma L_\Sigma[A]+L_\Sigma[A]\Sigma=A.
\label{eq:lyapunov_operator}
\end{equation}
The Bures--Wasserstein metric on $\SPD$ is
\begin{equation}
g_{\BW,\Sigma}(A,B)=\tfrac12\tr\!\left(L_\Sigma[A]B\right),
\label{eq:bw_metric}
\end{equation}
and we consider the natural mechanical Lagrangian \citep{arnold1989mathematical}
\begin{equation}
\cL(\Sigma,\dot\Sigma)=\tfrac12 g_{\BW,\Sigma}(\dot\Sigma,\dot\Sigma)-V(\Sigma),
\label{eq:natural_lagrangian}
\end{equation}
with $V$ a smooth potential on an open subset of $\SPD$.

\begin{proposition}[Hamiltonian lift]
\label{prop:hamiltonian_lift}
The Lagrangian \eqref{eq:natural_lagrangian} has canonical momentum $\Pi=\tfrac12 L_\Sigma[\dot\Sigma]$ and Hamiltonian
\begin{equation}
\cH(\Sigma,\Pi)=2\tr(\Pi\Sigma\Pi)+V(\Sigma),
\label{eq:hamiltonian}
\end{equation}
with equations of motion
\begin{equation}
\dot\Sigma=2(\Sigma\Pi+\Pi\Sigma),\qquad
\dot\Pi=-2\Pi^2-\frac{\partial V}{\partial\Sigma}.
\label{eq:hamilton}
\end{equation}
\end{proposition}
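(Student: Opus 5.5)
We carry out the Legendre transform on the space of symmetric matrices, pairing a momentum $\Pi$ with a velocity through the trace form $\langle \Pi,A\rangle=\tr(\Pi A)$.

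\textbf{Canonical momentum.} The kinetic term is $T=\tfrac14\tr(L_\Sigma[\dot\Sigma]\dot\Sigma)$. We first check that $L_\Sigma$ is self-adjoint for the trace form. Write $X=L_\Sigma[A]$ and $Y=L_\Sigma[B]$. Then
\[
\tr(XB)=\tr(X\Sigma Y+XY\Sigma)=\tr(\Sigma YX+Y\Sigma X)=\tr(AY),
\]
so $\tr(L_\Sigma[A]B)=\tr(A\,L_\Sigma[B])$. Differentiating $T$ in the direction $\delta\dot\Sigma$ therefore gives
\[
\delta T=\tfrac12\tr\bigl(L_\Sigma[\dot\Sigma]\,\delta\dot\Sigma\bigr),
\]
and hence $\Pi=\partial\cL/\partial\dot\Sigma=\tfrac12 L_\Sigma[\dot\Sigma]$. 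This is symmetric, as a momentum for symmetric velocities should be. Because $L_\Sigma$ is invertible with inverse $A\mapsto\Sigma A+A\Sigma$, the Legendre map is a bijection and $\dot\Sigma=2(\Sigma\Pi+\Pi\Sigma)$. This is the first Hamilton equation, and it will also follow from $\partial\cH/\partial\Pi$.

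\textbf{The Hamiltonian.} We compute $\cH=\tr(\Pi\dot\Sigma)-\cL=T+V$. Using $L_\Sigma[\dot\Sigma]=2\Pi$,
\[
T=\tfrac14\tr(2\Pi\dot\Sigma)=\tfrac12\tr\bigl(\Pi\cdot 2(\Sigma\Pi+\Pi\Sigma)\bigr)=2\tr(\Pi\Sigma\Pi),
\]
by cyclicity of the trace. As a check, $\partial_\Pi\,2\tr(\Pi\Sigma\Pi)=2(\Sigma\Pi+\Pi\Sigma)$, which reproduces $\dot\Sigma$.

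\textbf{Second Hamilton equation.} We have $\dot\Pi=-\partial\cH/\partial\Sigma$. Since $\cH$ is linear in $\Sigma$ through $\tr(\Sigma\Pi^2)$, the kinetic contribution is $-2\Pi^2$, which is already symmetric. Adding the potential gradient gives $\dot\Pi=-2\Pi^2-\partial V/\partial\Sigma$.

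\textbf{Main obstacle.} The delicate point is the convention for derivatives with respect to a symmetric matrix. We define $\partial V/\partial\Sigma$ as the unique symmetric matrix $G$ with $dV=\tr(G\,d\Sigma)$ for symmetric $d\Sigma$. With this convention, Hamilton's equations for the pairing $\tr(\Pi\,d\Sigma)$ on $\mathrm{Sym}_n\times\mathrm{Sym}_n$ hold without stray factors of $2$ on the off-diagonal entries.

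To confirm that no factor is lost, we would rederive the second equation directly from Euler--Lagrange, $\tfrac{d}{dt}\Pi=\partial\cL/\partial\Sigma$. Differentiating the Lyapunov identity $\Sigma X+X\Sigma=A$ with respect to $\Sigma$ in direction $E$ gives
\[
\delta X=-L_\Sigma[EX+XE].
\]
Hence
\[
\delta T=\tfrac14\tr\bigl(\delta L_\Sigma[\dot\Sigma]\,\dot\Sigma\bigr)=-\tfrac14\tr\bigl((EX+XE)\,L_\Sigma[\dot\Sigma]\bigr),
\]
using self-adjointness of $L_\Sigma$. With $X=L_\Sigma[\dot\Sigma]=2\Pi$ this becomes $-\tfrac14\tr\bigl((2E\Pi+2\Pi E)\,2\Pi\bigr)=-2\tr(E\Pi^2)$. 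So $\partial T/\partial\Sigma=-2\Pi^2$, in agreement with $-\partial\cH/\partial\Sigma$ from the Hamiltonian side.
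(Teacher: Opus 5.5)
Your proof is correct and follows the same route as the paper. The paper takes the kinetic term $\tfrac14\tr(L_\Sigma[\dot\Sigma]\dot\Sigma)$, obtains the momentum through the trace pairing, inverts $L_\Sigma$ via the Lyapunov equation, forms $\cH=\tr(\Pi\dot\Sigma)-\cL$, and differentiates to get Hamilton's equations. You additionally make explicit what the paper leaves implicit: the self-adjointness of $L_\Sigma$ (which gives the factor $\tfrac12$ in $\Pi$), the symmetric-gradient convention $dV=\tr(G\,d\Sigma)$ that avoids stray off-diagonal factors of $2$, and an Euler--Lagrange cross-check that $\partial T/\partial\Sigma=-2\Pi^2$.
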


\begin{proof}
The kinetic Lagrangian is $\cL_{\mathrm{kin}}=\tfrac14\tr(L_\Sigma[\dot\Sigma]\dot\Sigma)$, so, using the trace pairing to identify symmetric matrices with their duals, $\Pi=\partial\cL/\partial\dot\Sigma=\tfrac12 L_\Sigma[\dot\Sigma]$. The Lyapunov equation \eqref{eq:lyapunov_operator} with $A=\dot\Sigma$ then gives $\dot\Sigma=2(\Sigma\Pi+\Pi\Sigma)$. Computing $\cH_{\mathrm{kin}}=\tr(\Pi\dot\Sigma)-\cL_{\mathrm{kin}}=2\tr(\Pi\Sigma\Pi)$, and differentiating \eqref{eq:hamiltonian} with respect to $\Pi$ and $\Sigma$, recovers \eqref{eq:hamilton}.
\end{proof}

The explicit kinetic Hamiltonian \eqref{eq:hamiltonian} is specific to the Lyapunov representation of the Bures--Wasserstein metric. In a generic Riemannian formulation the kinetic Hamiltonian is written abstractly using the inverse metric; here the inverse metric is realised by the matrix map $\Pi\mapsto 2(\Sigma\Pi+\Pi\Sigma)$, giving the closed expression $2\tr(\Pi\Sigma\Pi)$. This form is what makes the subsequent fluctuation calculation elementary.

For a regular solution, the kinematic equation has Lyapunov form $\dot\Sigma=\Sigma X+X\Sigma$ with $X=2\Pi$. Hence positivity is preserved as long as the solution remains finite, since $\Sigma(t)$ evolves by congruence. The phase space carries the canonical symplectic form $\omega=\tr(d\Pi\wedge d\Sigma)$, preserved by the Hamiltonian flow of \eqref{eq:hamiltonian} by Liouville's theorem.

As a scalar check, when $n=1$ the Lyapunov equation gives $L_\Sigma[A]=A/(2\Sigma)$ and hence $g_{\BW,\Sigma}(A,A)=A^2/(4\Sigma)$. Thus $\cL_{\rm kin}=\dot\Sigma^2/(8\Sigma)$, $\Pi=\dot\Sigma/(4\Sigma)$, and $\cH_{\rm kin}=2\Sigma\Pi^2$, exactly the one-dimensional reduction of \eqref{eq:hamiltonian}.

\subsection{Trace and determinant dynamics}
\label{sec:trace_determinant}

The Hamiltonian equations also give simple evolution identities for two basic covariance invariants. Taking the trace of the kinematic equation yields
\begin{equation}
\frac{d}{dt}\tr\Sigma=4\tr(\Sigma\Pi).
\label{eq:trace_dynamics}
\end{equation}
Thus the total covariance scale is generally not conserved, but is coupled to the momentum through $\tr(\Sigma\Pi)$.

Jacobi's formula gives
\begin{align}
\frac{d}{dt}\log\det\Sigma
&=\tr(\Sigma^{-1}\dot\Sigma) \nonumber\\
&=2\tr\!\left(\Pi+\Sigma^{-1}\Pi\Sigma\right)
=4\tr\Pi,
\label{eq:logdet_dynamics}
\end{align}
and consequently
\begin{equation}
\frac{d}{dt}\det\Sigma=4\det(\Sigma)\tr\Pi.
\label{eq:det_dynamics}
\end{equation}
The determinant is therefore constant on any time interval on which the momentum remains traceless. From the second Hamilton equation,
\begin{equation}
\frac{d}{dt}\tr\Pi=-2\tr(\Pi^2)-\tr\!\left(\frac{\partial V}{\partial\Sigma}\right).
\label{eq:momentum_trace_dynamics}
\end{equation}
These scalar equations are not closed in general, since they retain information about the full matrix variables. For the potential introduced in Section~\ref{sec:potential}, the isotropic sector $\Sigma=\sigma I$, $\Pi=pI$ is invariant and reduces to
\begin{equation}
\dot\sigma=4\sigma p,\qquad
\dot p=-2p^2+\frac{\kappa}{2(\sigma-\nu)}-\alpha(\sigma^3-\lambda).
\label{eq:isotropic_dynamics}
\end{equation}
This sector makes explicit how the barrier prevents the covariance scale from reaching the floor and how the trace-control term determines the restoring force.

\section{Overdamped reduction}
\label{sec:overdamped}

Adding Rayleigh dissipation \citep{goldstein2002classical} $\cR(\Sigma,\dot\Sigma)=\tfrac\gamma2 g_{\BW,\Sigma}(\dot\Sigma,\dot\Sigma)$ with $\gamma>0$, the dissipative Euler--Lagrange equation reads, after writing $F=V/\gamma$ and taking $\gamma\to\infty$ with $F$ fixed,
\begin{equation}
\frac{\partial F}{\partial\Sigma}=-\tfrac12 L_\Sigma[\dot\Sigma].
\label{eq:overdamped_balance}
\end{equation}
The defining relation $\partial F/\partial\Sigma=\tfrac12 L_\Sigma[\grad_{\BW}F]$ of the Bures--Wasserstein gradient combined with the invertibility of $L_\Sigma$ then yields
\begin{equation}
\dot\Sigma=-\grad_{\BW}F(\Sigma).
\label{eq:bw_gradient_flow}
\end{equation}
The Bures--Wasserstein gradient flow is therefore the overdamped Rayleigh reduction of the Hamiltonian system \eqref{eq:hamilton}, not the Legendre transform of it.

\section{Spectral-floor trace potential}
\label{sec:potential}

Let $\Phi(\Sigma)=n^{-1}\tr\Sigma$. For parameters $\nu\geq 0$, $\kappa>0$, $\alpha>0$, $\lambda>0$, and domain $\mathcal D_\nu=\{\Sigma\in\SPD:\Sigma\succ\nu I\}$, define
\begin{equation}
V(\Sigma)=-\tfrac\kappa2\log\det(\Sigma-\nu I)
+\alpha n\!\left(\tfrac14\Phi(\Sigma)^4-\lambda\Phi(\Sigma)\right).
\label{eq:potential}
\end{equation}
The logarithmic barrier imposes a spectral floor at $\nu$ \citep{boyd2004convex}, while the trace term controls the total covariance scale through the preferred value $\lambda$; here $\lambda$ has the dimension of $\Phi^3$. The first variation is
\begin{equation}
\frac{\partial V}{\partial\Sigma}=-\tfrac\kappa2(\Sigma-\nu I)^{-1}+\alpha(\Phi^3-\lambda)I.
\label{eq:potential_gradient}
\end{equation}
For an isotropic equilibrium $\Sigma_0=sI$ with $s>\nu$, the stationarity condition becomes
\begin{equation}
\alpha(s^3-\lambda)=\frac{\kappa}{2(s-\nu)}.
\label{eq:equilibrium}
\end{equation}
The spectral barrier shifts the equilibrium away from the uncoupled trace value $s^3=\lambda$; in the far-from-boundary regime the shift is perturbatively small.

\section{Quadratic fluctuations and boundary stiffness}
\label{sec:quadratic}

Let $\Sigma=sI+Q$ and $\Pi=P$ with symmetric fluctuations $Q,P$. Since the equilibrium momentum is zero, the quadratic Hamiltonian is
\begin{equation}
\cH^{(2)}=2s\tr(P^2)+\tfrac12\,Q\!:\!\mathcal K\!:\!Q,
\label{eq:quadratic_hamiltonian}
\end{equation}
with $\mathcal K$ the Hessian of $V$ at $sI$.

\begin{proposition}[Hessian at the isotropic equilibrium]
\label{prop:hessian}
At $\Sigma_0=sI$,
\begin{equation}
K_{ij,kl}=\tfrac{C}{2}\!\left(\delta_{ik}\delta_{jl}+\delta_{il}\delta_{jk}\right)+\frac{3\alpha s^2}{n}\delta_{ij}\delta_{kl},
\qquad
C=\frac{\kappa}{2(s-\nu)^2}.
\label{eq:hessian}
\end{equation}
\end{proposition}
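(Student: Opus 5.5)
The plan is to compute the second variation of $V$ at $\Sigma_0=sI$ in a symmetric direction $Q$, that is, $\frac{d^2}{d\epsilon^2}V(sI+\epsilon Q)\big|_{\epsilon=0}=Q\!:\!\mathcal K\!:\!Q$. Polarization of this quadratic form then gives the components $K_{ij,kl}$. Because the potential \eqref{eq:potential} splits additively, I would treat the barrier term and the trace term separately and add their Hessians at the end.

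For the barrier term, $(sI+\epsilon Q)-\nu I=(s-\nu)\bigl(I+\epsilon Q/(s-\nu)\bigr)$. Expanding $\log\det(I+\epsilon A)=\epsilon\tr A-\tfrac{\epsilon^2}{2}\tr(A^2)+O(\epsilon^3)$ with $A=Q/(s-\nu)$ yields
\begin{equation*}
-\tfrac\kappa2\log\det(\Sigma-\nu I)=\text{const}-\tfrac{\kappa}{2(s-\nu)}\epsilon\tr Q+\tfrac{\kappa}{4(s-\nu)^2}\epsilon^2\tr(Q^2)+O(\epsilon^3).
\end{equation*}
The quadratic part is therefore $\tfrac12 C\tr(Q^2)$ with $C=\kappa/(2(s-\nu)^2)$. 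Equivalently, differentiating $-\tfrac\kappa2(\Sigma-\nu I)^{-1}$ from \eqref{eq:potential_gradient} gives $\tfrac\kappa2(s-\nu)^{-2}Q$.

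For the trace term, set $\Phi=s+\epsilon\tr Q/n$. The function $\alpha n(\tfrac14\Phi^4-\lambda\Phi)$ has second derivative $3\alpha n\Phi^2\,(\tr Q/n)^2=\tfrac{3\alpha s^2}{n}(\tr Q)^2$ at $\epsilon=0$. This agrees with differentiating $\alpha(\Phi^3-\lambda)I$ in \eqref{eq:potential_gradient}, which gives $\tfrac{3\alpha s^2}{n}(\tr Q)I$. The stationarity condition \eqref{eq:equilibrium} only removes the linear term and is not needed for the Hessian itself.

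The remaining step is to convert these two quadratic forms into a 4-index tensor, and this is where care is needed. Writing $\tr(Q^2)=\sum Q_{ij}Q_{ij}$ and $(\tr Q)^2=\sum\delta_{ij}\delta_{kl}Q_{ij}Q_{kl}$, the barrier form must be represented by the symmetrized identity $\tfrac12(\delta_{ik}\delta_{jl}+\delta_{il}\delta_{jk})$. This is the identity on symmetric matrices, so it acts consistently with $Q=Q^\top$ and matches the derivative map $Q\mapsto CQ$. Summing the two contributions gives \eqref{eq:hessian}.

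As a consistency check, I would verify the mode structure: traceless $Q$ has eigenvalue $C$, and $Q\propto I$ has eigenvalue $C+3\alpha s^2$.
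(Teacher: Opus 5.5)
Your proposal is correct and takes essentially the paper's route. The paper differentiates the first variation \eqref{eq:potential_gradient} once more, using $\delta(\Sigma-\nu I)^{-1}=-(\Sigma-\nu I)^{-1}(\delta\Sigma)(\Sigma-\nu I)^{-1}$ for the barrier and $\delta\Phi=\tr(\delta\Sigma)/n$ for the trace term; this is the equivalent cross-check you give alongside your direct $\log\det$ expansion, and your explicit justification of the symmetrized-identity representation of the barrier term is a detail the paper leaves implicit.
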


\begin{proof}
From $\delta(\Sigma-\nu I)^{-1}=-(\Sigma-\nu I)^{-1}(\delta\Sigma)(\Sigma-\nu I)^{-1}$, the barrier contribution to the Hessian at $sI$ is $C\cdot\tfrac12(\delta_{ik}\delta_{jl}+\delta_{il}\delta_{jk})$ with $C$ as in \eqref{eq:hessian}. For the trace term, $\delta[\alpha(\Phi^3-\lambda)I]=(3\alpha\Phi^2/n)\tr(\delta\Sigma)\,I$, which at $\Phi=s$ gives the second contribution.
\end{proof}

Decomposing $Q=q_0\,I/\sqrt n+Q_{\mathrm{tl}}$ with $\tr Q_{\mathrm{tl}}=0$,
\begin{equation}
\tfrac12\,Q\!:\!\mathcal K\!:\!Q
=\tfrac12\!\left(C+3\alpha s^2\right)q_0^2
+\tfrac12 C\tr(Q_{\mathrm{tl}}^2).
\label{eq:trace_traceless_energy}
\end{equation}
The trace mode receives the additional stiffness $3\alpha s^2$; traceless modes feel only the geometric stiffness $C$.

The linearized Hamilton equations are
\begin{equation}
\dot Q=4sP,\qquad \dot P=-\mathcal K[Q],
\label{eq:linearized_hamilton}
\end{equation}
and hence
\begin{equation}
\ddot Q=-4s\,\mathcal K[Q].
\label{eq:linearized_second_order}
\end{equation}
The squared frequencies of the traceless and normalized trace modes are therefore
\begin{equation}
\omega_{\mathrm{tl}}^2=4sC,\qquad
\omega_{\mathrm{tr}}^2=4s(C+3\alpha s^2).
\label{eq:mode_frequencies}
\end{equation}
Thus the stiffness diverges as $(s-\nu)^{-2}$, while the corresponding linear oscillation frequencies diverge as $(s-\nu)^{-1}$ near the spectral floor.

\begin{corollary}[Boundary stiffness scaling]
\label{cor:boundary_stiffness}
As $s\to\nu^+$,
\begin{equation}
C(s,\nu)=\frac{\kappa}{2(s-\nu)^2}\to\infty,
\label{eq:boundary_scaling}
\end{equation}
so all local covariance fluctuation modes acquire an unbounded restoring stiffness as the spectral floor is approached, while the trace-control contribution $3\alpha s^2$ remains finite for finite~$s$.
\end{corollary}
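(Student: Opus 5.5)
The corollary follows directly from Proposition~\ref{prop:hessian} together with the trace/traceless decomposition \eqref{eq:trace_traceless_energy}. I will read off the limit of $C$, then check that it controls every mode.

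\emph{Step 1: the limit of $C$.} The coefficient $C=\kappa/(2(s-\nu)^2)$ is given explicitly in \eqref{eq:hessian}. Since $\kappa>0$ is fixed and $(s-\nu)^2\to0^+$ as $s\to\nu^+$, we get $C\to+\infty$.

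\emph{Step 2: every mode inherits this stiffness.} By \eqref{eq:trace_traceless_energy}, the Hessian quadratic form is
\[
\tfrac12\,Q\!:\!\mathcal K\!:\!Q \;\ge\; \tfrac12 C\,\tr(Q^2).
\]
This holds because $\tr(Q^2)=q_0^2+\tr(Q_{\mathrm{tl}}^2)$ for the orthogonal decomposition $Q=q_0 I/\sqrt n+Q_{\mathrm{tl}}$, and because $3\alpha s^2\ge0$. Hence the smallest Hessian eigenvalue is $C$, attained on the traceless subspace, and every fluctuation direction $Q\neq0$ has a restoring stiffness of at least $C$, which is unbounded. The same conclusion holds for the frequencies \eqref{eq:mode_frequencies}: $4sC\to\infty$ because $s\to\nu$ and, implicitly, $\nu>0$. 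If $\nu=0$, the factor $s$ competes, but $sC\sim\kappa/(2s)$ still diverges. The stiffness statement itself does not depend on this.

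\emph{Step 3: the trace-control contribution stays finite.} The term $3\alpha s^2$ is a continuous function of $s$, so it tends to $3\alpha\nu^2<\infty$. Therefore, near the floor, the ratio of the trace-control stiffness to the barrier stiffness vanishes as $(s-\nu)^2$.

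\emph{Main subtlety.} The only real issue is interpretive. The equilibrium $s$ is not a free parameter; it is fixed by \eqref{eq:equilibrium}. So I must show that $s\to\nu^+$ is actually realizable within the model. From \eqref{eq:equilibrium}, $s-\nu=\kappa/(2\alpha(s^3-\lambda))$. Taking $\kappa\to0^+$ with $\lambda>\nu^3$ fixed, a root with $s\to\nu^+$ would need $s^3-\lambda>0$, which fails. Instead, I will take $\lambda\downarrow\nu^3$ and $\kappa\to0$ suitably, or simply vary $\nu$ upward toward a root. Concretely, for fixed $\kappa,\alpha$, choosing $\lambda<\nu^3$ makes the left side of \eqref{eq:equilibrium} positive near $s=\nu$, and $\kappa/(\alpha(s^3-\lambda))\to0$ gives $s\to\nu^+$.

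Alternatively, I will note that the corollary is stated for the function $C(s,\nu)$ itself. In that reading the limit is purely in the parameter $s$, and Steps 1--3 suffice.
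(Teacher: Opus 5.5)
Your Steps 1--3, under the reading in your final paragraph ($s$ treated as a free parameter with $\kappa$ and $\alpha$ fixed), are correct and match the paper. The paper states the corollary without separate proof, reading it off from Proposition~\ref{prop:hessian} and \eqref{eq:trace_traceless_energy}. Your bound $\tfrac12 Q\!:\!\mathcal K\!:\!Q\ge\tfrac12 C\tr(Q^2)$ is a tidy way to justify ``all modes'' (for $n=1$ there is no traceless sector, but the bound still holds).

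Your ``main subtlety'' paragraph, however, is wrong and should be dropped rather than offered as part of the proof.

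\emph{The fixed-parameter route.} With $\kappa,\alpha,\lambda,\nu$ all fixed nothing varies. The assertion $\kappa/(2\alpha(s^3-\lambda))\to0$ is circular, since by \eqref{eq:equilibrium} that quantity equals $s-\nu$. In fact $\lambda>0$ gives $s-\nu=\kappa/(2\alpha(s^3-\lambda))>\kappa/(2\alpha s^3)$. So at fixed $\kappa,\alpha$ the equilibrium stays a finite distance from the floor unless $\nu\to\infty$, in which case $3\alpha s^2\to\infty$ too.

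\emph{The $\lambda\downarrow\nu^3$ route.} Taking $\lambda\downarrow\nu^3$ with $\kappa\to0$ does drive $s\to\nu^+$, but it destroys the conclusion. Substituting \eqref{eq:equilibrium} into $C$ gives $C=\alpha(s^3-\lambda)/(s-\nu)$. Then $\lambda>\nu^3$ yields $C<\alpha(s^3-\nu^3)/(s-\nu)=\alpha(s^2+s\nu+\nu^2)$, which stays bounded.

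\emph{What a realizable path actually gives.} One path along which $C$ does diverge is $\lambda<\nu^3$ fixed with $\kappa\to0^+$. There $s-\nu\simeq\kappa/[2\alpha(\nu^3-\lambda)]$ and $C\simeq\alpha(\nu^3-\lambda)/(s-\nu)$, which is only a $(s-\nu)^{-1}$ law. The $(s-\nu)^{-2}$ law needs $\kappa$ fixed, e.g.\ $\alpha\to\infty$, where $3\alpha s^2$ also grows.

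So the corollary really is a statement about the function $C(s,\nu)$ at fixed $\kappa$, as the paper and your last paragraph treat it. Your realizability argument does not establish what it claims, and it is not needed.
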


This is a local statement about the quadratic Hamiltonian near the constrained isotropic state. On the open domain $\mathcal D_\nu$ the Hamiltonian vector field is smooth, and finite-energy trajectories cannot cross the logarithmic barrier because $V\to+\infty$ as the smallest eigenvalue of $\Sigma-\nu I$ tends to zero.

\section{Discussion}
\label{sec:discussion}

The Bures--Wasserstein gradient flow admits a Hamiltonian lift to a finite-dimensional cotangent bundle with canonical symplectic structure. The gradient flow is not the Legendre transform of the lift but its overdamped Rayleigh reduction, so the Hamiltonian system \eqref{eq:hamilton} acts as a conservative parent theory for dissipative Bures--Wasserstein covariance relaxation rather than a replacement for it.

For the spectral-floor trace potential \eqref{eq:potential}, the Hessian at the isotropic equilibrium separates trace and traceless modes, with baseline stiffness $C=\kappa/[2(s-\nu)^2]$ diverging as the equilibrium covariance approaches the floor. The quadratic Hamiltonian \eqref{eq:quadratic_hamiltonian} is a Gaussian Hamiltonian on the $N=n(n+1)/2$-dimensional symmetric-matrix phase space and generates an element of $\mathfrak{sp}(2N,\mathbb R)$ \citep{arvind1995real}. In oscillator variables, any such quadratic Hamiltonian admits the usual decomposition into anomalous and number-preserving terms \citep{adesso2014}. These are represented, on two-mode subblocks, by the standard $\mathfrak{su}(1,1)$ squeezing and $\mathfrak{su}(2)$ rotation algebras of Gaussian dynamics \citep{yurke1986su2}. A detailed sector classification for specific physical readouts is left to subsequent work.

Related first-order differential descriptions of Gaussian covariance data have been developed by L\'opez-Sald\'ivar, Man'ko, and Man'ko for unitary and nonunitary Gaussian-state evolution \citep{lopez2020differential}, and by L\'opez-Sald\'ivar for multimode Gaussian equilibrium states and their thermodynamic parametrization \citep{lopez2023differential}. These approaches begin with quantum or thermal evolution and derive differential equations for Gaussian-state parameters. The construction considered here is complementary: it begins with the Bures--Wasserstein metric on covariance space and introduces an independent canonical momentum, producing an inertial cotangent-bundle dynamics whose strongly damped limit is a first-order gradient flow.

For continuous-variable quantum systems, positive definiteness is necessary but not sufficient for a covariance matrix to represent a physical state. In addition, the Robertson--Schr\"odinger condition
\begin{equation}
\Sigma+\frac{i\hbar}{2}\Omega\succeq0,
\label{eq:quantum_uncertainty}
\end{equation}
where $\Omega$ is the standard symplectic form, must hold, equivalently imposing a lower bound on the symplectic eigenvalues of $\Sigma$ \citep{weedbrook2012gaussian,adesso2014}. The ordinary spectral barrier considered here does not in general enforce this symplectic condition, although the stronger bound $\Sigma\succeq(\hbar/2)I$ is sufficient in standard quadrature conventions (in the $[q,p]=2i$ convention used in related work, the vacuum sits at $\Sigma=I$ and the floor at unity; the two differ only by an overall quadrature scale). Similarly, Gaussian separability criteria involve the uncertainty condition after partial transposition and are not determined by the ordinary eigenvalue floor \citep{simon2000peres,werner2001bound}. The present Hamiltonian lift may provide a starting point for covariance dynamics subject to such quantum constraints, but preserving the quantum-admissibility or Gaussian-PPT regions would require barriers or potentials formulated in terms of symplectic eigenvalues or the corresponding uncertainty matrices.

\section*{Acknowledgments}
The author acknowledges the use of AI assistants for structural brainstorming, language refinement, and \LaTeX{} editing during the drafting of this manuscript. The author bears full responsibility for the accuracy and originality of the scientific arguments and equations presented here.

\bibliographystyle{elsarticle-num}
\bibliography{bib,bib_reviewer_additions}
\end{document}